\title{Submodular Optimization in the MapReduce Model}
\titlerunning{Submodular Optimization in the MapReduce Model} %optional, in case that the title is too long; the running title should fit into the top page column
\author{Paul Liu and Jan Vondrak}{Stanford University, USA}{\{paulliu, jvondrak\}@stanford.edu}{}{}
\authorrunning{P. Liu and J. Vondrak} %mandatory. First: Use abbreviated first/middle names. Second (only in severe cases): Use first author plus 'et. al.'
\subjclass{
Theory of computation $\rightarrow$ MapReduce algorithms; Distributed computing models; Algorithm design techniques; Submodular optimization and polymatroids
} 
\keywords{mapreduce, submodular, optimization, approximation algorithms}
\begin{document}

\maketitle

\begin{abstract}
Submodular optimization has received significant attention in both practice and theory, as a wide array of problems in machine learning, auction theory, and combinatorial optimization have submodular structure. In practice, these problems often involve large amounts of data, and must be solved in a distributed way. One popular framework for running such distributed algorithms is MapReduce. In this paper, we present two simple algorithms for cardinality constrained submodular optimization in the MapReduce model: the first is a $(1/2-o(1))$-approximation in 2 MapReduce rounds, and the second is a $(1-1/e-\epsilon)$-approximation in $\frac{1+o(1)}{\epsilon}$ MapReduce rounds.
 \end{abstract}

\section{Introduction}
%auto-ignore

Let $f : 2^V \rightarrow \mathbb{R}^{+}$ be a function satisfying $f(A\cup\{e\}) - f(A) \geq f(B\cup\{e\}) - f(B)$ for all $A \subseteq B$ and $e \notin B$. Such a function is called \emph{submodular}. When $f$ satisfies the additional property $f(A\cup\{e\}) - f(A) \geq 0$ for all $A$ and $e \notin A$, we say $f$ is \emph{monotone}. 

Many combinatorial optimization problems can be cast as submodular optimization problems. Such problems include classics such as max cut, min cut, maximum coverage, and minimum spanning tree~\cite{F05}. Although submodular optimization encompasses several NP-Hard problems, well-known greedy approximation algorithms are known~\cite{NW78}. We focus on the special case of monotone submodular maximization under a cardinality constraint $k$, i.e. 
\[
OPT := \max_{S\subseteq V, |S|\leq k} f(S), \, \textrm{$f$ is monotone}.
\]
In particular, it is known that one can approximate a cardinality constrained monotone submodular maximization problem to a factor of $1-1/e$ of optimal.

Due to rapidly growing datasets, recent focus has been on submodular optimization in distributed models. In this work, we focus on the MapReduce model, where complexity is measured as the number of synchronous communication rounds between the machines involved. The current state of the art for cardinality constrained submodular maximization is the algorithm of Barbosa et al.~\cite{BENW16}, which achieves a $1/2-\epsilon$ approximation in 2 rounds and was the first to achieve a $1-1/e-\epsilon$ approximation in $O\left(\frac{1}{\epsilon}\right)$ rounds. Both algorithms actually require significant duplication of the ground set (each element being sent to $\Omega(\frac{1}{\epsilon})$ machines). Since this might be an issue in practice, \cite{BENW16} mentions that without duplication, the two algorithms could be implemented in $O(\frac{1}{\epsilon} \log \frac{1}{\epsilon})$ and $O(\frac{1}{\epsilon^2})$ rounds, respectively. Earlier, Mirrokni and Zadimoghaddam~\cite{MZ15} gave a $0.27$-approximation in 2 rounds without duplication and a $0.545$-approximation with $\Theta(\frac{1}{\epsilon} \log \frac{1}{\epsilon})$ duplication. 

\noindent{\bf Our contribution.}
We focus on the most practical regime of MapReduce algorithms for cardinality constrained submodular maximization, which is a small constant number of rounds and no duplication of the dataset. To our knowledge, the 0.27-approximation of \cite{MZ15} has been the best result in this regime so far.

We describe a simple thresholding algorithm which achieves the following:
In 2 rounds of MapReduce, with one random partitioning of the dataset (no duplication), we obtain a $(1/2-\epsilon)$-approximation. In 4 rounds, we obtain a $5/9$-approximation. More generally, in $2t$ rounds, we obtain a $(1 - (1-\frac{1}{t+1})^t - \epsilon)$-approximation, which we show to be optimal for this type of algorithm. Crucially, the parameter $\epsilon$ does not affect the number of rounds, and only mildly affects the memory (in that $\epsilon$ can be taken to $\tilde{O}(\sqrt{k/n})$ without asymptotically increasing the memory).

Our algorithm is inspired by the work of Kumar et al.~\cite{KMVV15} and McGregor-Vu~\cite{MV17} in the streaming setting. It is also similar to a recent algorithm of Assadi-Khanna~\cite{AK18}, who study the communication complexity of the maximum coverage problem. As such, our algorithm is not particularly novel, but we believe that our analysis of its performance in the MapReduce model is, thus simplifying and improving the previous work of \cite{BENW16} and \cite{MZ15}. 

\noindent{\bf Open question.}
The most intriguing remaining question in our opinion (for the cardinality constrained submodular problem) is whether $\Theta(1/\epsilon)$ rounds are necessary to achieve a $(1-1/e-\epsilon)$-approximation. So far there is no evidence that a $(1-1/e)$-approximation in a constant number of rounds is impossible.

\subsection{The MapReduce Model}
There are many variants of MapReduce models, and algorithms between the different models are largely transferable. We use a variant of the $\mathcal{MRC}$ model of Karloff et al.~\cite{KSV10}. In this model, an input of size $N$ is distributed across $O(N^\delta)$ machines, each with $O(N^{1-\delta})$ memory. We relax the model slightly, and allow one central machine to have memory slightly expanded to $\tilde{O}(N^{1-\delta})$.

Computation then proceeds in a sequence of synchronous communication rounds. In each round, each machine receives an input of size $O(N^{1-\delta})$. Each machine then performs computations on that input, and produces output messages which are delivered to other machines (specified in the message) as input at the start of the next round. The total size of these output messages must also be $O(N^{1-\delta})$ per machine. We refer the reader to the work of Karloff et al.~\cite{KSV10} for additional details.

In our applications, we assume the input is a set of elements $V$ and a cardinality parameter $k$. Each machine has an oracle that allows it to evaluate $f$. %For instance, $V$ could be a collection of sets, and $f$ could be the size of the union of sets that it is given. In this case, maximizing $f$ under a cardinality constraint $k$ is exactly the problem of maximum coverage with $k$ sets. 
Under these constraints, we assume that each machine has memory $O(\sqrt{nk})$ (except for a single `central' machine with $O(\sqrt{nk}\log k)$ memory) and that there are $\sqrt{n/k}$ machines in total.

\section{A thresholding algorithm for submodular maximization}
%auto-ignore

In the following algorithms, let $f : 2^V \rightarrow \mathbb{R}^+$ be a monotone submodular function, $n = |V|$, and $f_S(e) = f(S\cup\{e\}) - f(S)$. We refer to $f_S(e)$ as the \emph{marginal} of $e$ with respect to $S$. Let $k$ be the maximum cardinality of the solution, and $m = \sqrt{n/k}$ be the number of machines.

\subsection{A $1/2-o(1)$ approximation in 2 rounds}
\label{sec:1/2}

First, we present a simple 1/2-approximation in 2 rounds, assuming we know the exact value of $OPT$. We will relax this assumption later. The algorithm requires two helper functions {\sc ThresholdGreedy} and {\sc ThresholdFilter}, which forms the basis of all of our algorithms. Roughly speaking, {\sc ThresholdGreedy} greedily adds to a set of elements while there exists an element of high marginal in the input set. {\sc ThresholdFilter} filters elements of low marginal out of the input set. 

\begin{algorithm}[t]
  \DontPrintSemicolon
  \KwIn{
  An input set $S$, a partial greedy solution $G$ with $|G| \leq k$, and a threshold $\tau$.
  }
  \KwOut{
  A set $G^\prime \supseteq G$ such that $f_{G^\prime}(e) < \tau$ for all $e\in S$ if $|G|<k$ or $f(G)\geq \tau k$.
  }
  \caption{
  {\sc ThresholdGreedy}$(S, G, \tau)$
  }
  $G^\prime \leftarrow G$\;
  \For{$e \in S$}{
  \lIf{$f_{G^\prime}(e) \geq \tau$ \textbf{\upshape and} $|G^\prime| < k$}{
  $G^\prime \leftarrow G^\prime\cup\{e\}$
  }
  }
  \KwRet{$G^\prime$}
\label{alg:threshold-greedy}
\end{algorithm}

\begin{algorithm}[t]
  \DontPrintSemicolon
  \KwIn{
  An input set $S$, a partial greedy solution $G$, and a threshold $\tau$.
  }
  \KwOut{
  A set $S^\prime \subseteq S$ such that $f_{G}(e) \geq \tau$ for all $e\in S^\prime$.
  }
  \caption{
  {\sc ThresholdFilter}$(S, G, \tau)$
  }
  $S^\prime \leftarrow S$\;
  \For{$e \in S$}{
  \lIf{$f_{G}(e) < \tau$}{
  $S^\prime \leftarrow S^\prime\setminus\{e\}$
  }
  }
  \KwRet{$S^\prime$}
\label{alg:threshold-filter}
\end{algorithm}

We define an additional function {\sc PartitionAndSample} which simply initializes all of our algorithms by partitioning the input set randomly and drawing a random sample from it.
\begin{algorithm}[h]
  \DontPrintSemicolon
  %\KwIn{
  %An input set $S$.
  %}
  %\KwOut{
  %A set $S^\prime \subseteq S$ such that $f_{G}(e) \geq \tau$ for all $e\in S^\prime$.
  %}
  \caption{
  {\sc PartitionAndSample}$(V)$
  }
  $S \leftarrow \textrm{sample each $e\in V$ with probability $p = 4\sqrt{k/n}$}$\;
  partition $V$ randomly into sets $V_1, V_2, \ldots V_m$ to the $m$ machines (one set per machine)\; 
  send $S$ to each machine and a central machine $C$\;
\label{alg:partition-and-sample}
\end{algorithm}

Using these three helper algorithms, our approximation algorithm is quite easy to implement, and can be found in Algorithm~\ref{alg:one-half-two-round-with-guess}.

\begin{algorithm}[h]
  \DontPrintSemicolon
  \caption{A simple 2-round 1/2 approximation, assuming $OPT$ is known.}
\textbf{round 1:}\;
  $S,V_1,\ldots,V_m \leftarrow \textsc{PartitionAndSample}(V)$\footnotemark\;
  \On{each machine $M_i$ (in parallel)}{
    $\tau \leftarrow \frac{OPT}{2k}$\;
    $G_0 \leftarrow \textsc{ThresholdGreedy}\left(S, \emptyset, \tau\right)$\;
    \leIf{$|G_0|<k$}{
    $R_i\leftarrow \textsc{ThresholdFilter}\left(V_i, G_0, \tau\right)$\;
    }{
    $R_i \leftarrow \emptyset$
    }
    send $R_i$ to a central machine $C$
  }
\textbf{round 2 (only on $C$):}\;
  compute $G_0$ from $S$ as in first round\;
  $G \leftarrow \textsc{ThresholdGreedy}\left(\cup_i R_i, G_0, \tau\right)$\;
  \KwRet{$G$}
\label{alg:one-half-two-round-with-guess}
\end{algorithm}
\footnotetext{Note that $S$ and the $V_i$ are not stored on one machine by {\sc PartitionAndSample}. We simply use the assignment to denote that the variables have been initialized and sent to their respective machines.}

\begin{lemma}
\label{lem:half-apx}
The approximation ratio of Algorithm~\ref{alg:one-half-two-round-with-guess} is at least $1/2$.
\end{lemma}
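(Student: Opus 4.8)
The plan is to show that the returned set $G$ satisfies $f(G) \ge \tau k = OPT/2$, where $\tau := OPT/(2k)$, via a deterministic case analysis on the final size $|G|$; notably, the random partition plays no role in the approximation guarantee itself (it is needed only for the memory bound). First I would record two structural facts. Since {\sc PartitionAndSample} sends the \emph{same} sample $S$ to every machine and to the central machine $C$, and every machine runs {\sc ThresholdGreedy}$(S,\emptyset,\tau)$ with the same $\tau$, the set $G_0$ is identical on all machines and equals the $G_0$ recomputed by $C$ in round~2. Consequently $G \supseteq G_0$, and $G$ arises from a single greedy process started at $\emptyset$ that first absorbs $S$ and then $\cup_i R_i$; every element ever inserted had marginal at least $\tau$ at the moment of insertion, so by telescoping $f(G) \ge |G|\,\tau$.

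\textbf{Case $|G| = k$.} Here the telescoping bound immediately gives $f(G) \ge k\tau = OPT/2$; this is the branch where the greedy ``fills up'' and monotonicity is not even needed.

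\textbf{Case $|G| < k$.} I would argue that every element $e \in V$ has $f_G(e) < \tau$. An element that survives into some $R_i$ was offered to the final {\sc ThresholdGreedy} run on $C$ but not taken while $|G| < k$, so it has marginal $< \tau$ with respect to $G$. An element $e$ filtered out by {\sc ThresholdFilter}$(V_i, G_0, \tau)$ satisfies $f_{G_0}(e) < \tau$, and since $G \supseteq G_0$, submodularity gives $f_G(e) \le f_{G_0}(e) < \tau$; note this case occurs because $|G| < k$ forces $|G_0| < k$ (otherwise $R_i = \emptyset$ and $G = G_0$ would have size $k$), so the filtering step is genuinely invoked and all of $S$ likewise has marginal $< \tau$ with respect to $G_0 \subseteq G$. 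Hence $f_G(e) < \tau$ for all $e \in V$. Letting $O$ be an optimal solution, monotonicity and submodularity yield $f(O) \le f(G \cup O) \le f(G) + \sum_{e \in O \setminus G} f_G(e) < f(G) + k\tau = f(G) + OPT/2$, so $f(G) > OPT/2$. Combining the two cases gives $f(G) \ge OPT/2$ throughout, which is the claim.

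The main obstacle is the second case: the delicate point is to certify that \emph{every} element of $V$ --- not merely those that reach $C$ inside $\cup_i R_i$ --- has small marginal with respect to the final $G$. This requires chaining the {\sc ThresholdFilter} guarantee (low marginal at $G_0$) with the monotonicity of submodular marginals along $G_0 \subseteq G$, and using that $|G| < k$ forces $|G_0| < k$ so that filtering actually happens. Once this global low-marginal property is in hand, the standard greedy inequality closes the argument, and the whole proof is seen to be independent of the sampling.
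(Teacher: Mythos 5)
Your proof is correct and follows essentially the same route as the paper's: the same case split on $|G|=k$ versus $|G|<k$, the observation that $G_0$ is identical across machines, and the standard monotonicity-plus-submodularity bound $f(O) \le f(G) + \sum_{e \in O\setminus G} f_G(e)$. The only difference is that you carefully spell out (via the {\sc ThresholdFilter} guarantee, submodularity along $G_0 \subseteq G$, and the fact that $|G|<k$ forces $|G_0|<k$) the claim the paper dismisses as ``clear,'' namely that every $e \in V$ has $f_G(e) < OPT/(2k)$ in the second case.
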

\begin{proof}
The following lemma is folklore, but we present it for completeness.

First, we note that $G_0$ is the same on each machine so long as the loop iterating through $S$ is done in a fixed order. We assume that this is the case. From this, it is clear that Algorithm~\ref{alg:one-half-two-round-with-guess} returns a set $G$ for which $f_G(e) < \frac{OPT}{2k}$ for any $e\in V$.

Let $G$ be the set returned at the end of the algorithm. Either $|G|=k$, or there is no $e\in V$ for which the marginal with respect to $G$ is greater than $OPT/2$. In the former case, we have $k$ elements of value at least $\frac{OPT}{2k}$ so we are done. In the latter case, let $O$ be the optimal solution. By monotonicity and submodularity, 
\[
OPT = f(O) \leq f(O \cup G) \leq f(G) + \sum_{e \in O\setminus G} f_{G}(e) \leq f(G) + k\cdot\frac{OPT}{2k}. \qedhere
\]
\end{proof}

Lemma~\ref{lem:half-apx} shows that the algorithm is correct. Each machine in round 1 clearly uses $O(\sqrt{nk})$ memory. It remains to bound the memory of the central machine in round 2.

\begin{lemma}
\label{lem:memory-use-2-round-with-guess}
With probability $1 - e^{-\Omega(k)}$, the number of elements sent to the central machine $C$ has cardinality at most $\sqrt{nk}$.
\end{lemma}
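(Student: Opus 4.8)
The plan is to reduce the claim to a concentration bound on the number of ``surviving'' high-marginal elements. Since $G_0$ is identical on every machine and each $R_i = \textsc{ThresholdFilter}(V_i, G_0, \tau)$ keeps exactly the elements of $V_i$ with $f_{G_0}(e) \geq \tau$ (while $R_i = \emptyset$ when $|G_0| = k$), the set of elements arriving at $C$ in round 2 is
\[
\bigcup_i R_i = H := \{\, e \in V : f_{G_0}(e) \geq \tau \,\}
\]
in the case $|G_0| < k$, and is empty when $|G_0| = k$. Thus it suffices to show $\Pr[\,|G_0| < k \text{ and } |H| \geq \sqrt{nk}\,] \leq e^{-\Omega(k)}$. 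I first note that by the output guarantee of \textsc{ThresholdGreedy}, when $|G_0| < k$ every $e \in S$ has $f_{G_0}(e) < \tau$, so $H \cap S = \emptyset$: only unsampled elements can survive.

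Next I would set up a coupling that exposes the randomness of $S$ in the order \textsc{ThresholdGreedy} consumes it. Fix a global processing order of $V$ extending the order used on $S$, maintain the partial solution $G$, and reveal each element's membership in $S$ (an independent $\mathrm{Bernoulli}(p)$ coin) only when it is reached. Call a step \emph{relevant} if at that moment $f_G(e) \geq \tau$ and $|G| < k$; at a relevant step $e$ is added to $G$ precisely when its coin is heads, and the outcome of this process is exactly $G_0$. By submodularity every $e \in H$ has marginal $\geq \tau$ already when it is processed, and (since $|G_0| < k$) satisfies $|G| < k$ there, so each element of $H$ occurs at a relevant step whose coin is tails. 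Hence $|H|$ is at most the number of tails among relevant steps, while the number of heads equals $|G_0| < k$.

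The key structural point is that, conditioned on the history up to any relevant step, that step's coin is a fresh $\mathrm{Bernoulli}(p)$ independent of the past; by the independence-of-optional-skipping principle the relevant-step coins therefore form a prefix $Z_1, Z_2, \ldots$ of an i.i.d.\ $\mathrm{Bernoulli}(p)$ sequence. On the event $\{\,|G_0| < k,\ |H| \geq \sqrt{nk}\,\}$ there are at least $\sqrt{nk}$ relevant steps in total and fewer than $k$ of them are heads, so among the first $q := \lceil\sqrt{nk}\rceil$ coins fewer than $k$ are heads. Since $q$ independent coins of mean $p = 4\sqrt{k/n}$ have expectation $qp \geq 4k$, a Chernoff lower-tail bound gives $\Pr\!\big[\sum_{j=1}^{q} Z_j < k\big] \leq e^{-\Omega(k)}$, which finishes the argument (the $|G_0| = k$ case is trivial since then the $R_i$ are empty). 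I expect the main obstacle to be making the coupling and the i.i.d.-prefix claim rigorous: both the relevant steps and the set $H$ are adaptively determined by the very coins we want to treat as independent, so care is needed to justify that the relevant-step coins are genuinely i.i.d. The other point to handle cleanly is the direction of the bound, namely that $|H|$ is at most the number of tails at relevant steps --- a tails at a relevant step need not land in $H$ (its marginal may drop below $\tau$ later), but only this one-sided inequality is needed.
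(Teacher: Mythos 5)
Your proof is correct, but it reaches the bound by a genuinely different route than the paper. The paper's proof first conditions on $|S| \geq 3\sqrt{nk}$ via a Chernoff bound, splits the sample into $3k$ blocks of size $\sqrt{n/k}$, observes that as long as at least $\sqrt{nk}$ high-marginal elements survive, each block adds an element to $G_0$ with probability $> 1/2$, and then applies Azuma's inequality to the resulting submartingale of block indicators to conclude that either $|G_0|$ reaches $k$ (so nothing is sent) or the surviving pool has already dropped below $\sqrt{nk}$. You instead couple the sampling coins to the greedy's execution in a fixed global order and invoke optional skipping / predictable selection: since relevance of a step is a deterministic function of earlier coins, the relevant-step coins form an i.i.d.\ Bernoulli$(p)$ prefix, and your bad event is \emph{deterministically} contained in the event that $\mathrm{Bin}(\lceil\sqrt{nk}\rceil, p)$, with mean at least $4k$, falls below $k$ --- a single binomial lower-tail bound. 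Your route buys several things: it avoids Azuma and the block bookkeeping entirely, needs no preliminary concentration on $|S|$, and the deterministic containment makes rigorous what the paper handles somewhat informally with ``conditioned on any prior history of the algorithm, since we can imagine that the blocks are sampled independently one at a time''; your one-sided observations (every $e \in H$ sits at a relevant tails step by submodularity, and relevant heads number exactly $|G_0| < k$) correctly neutralize the adaptivity of $H$ that you flagged as the main obstacle. The paper's argument, in exchange, is shorter to state given Azuma as a black box. One bookkeeping note: you count only $\bigcup_i R_i$ as sent to $C$, whereas \textsc{PartitionAndSample} also ships $S$ (of expected size $4\sqrt{nk}$) to $C$; the paper's proof counts $N_S + |S|$, which in fact exceeds its own stated $\sqrt{nk}$ bound, so the lemma as written is only literally true under your accounting --- and under either accounting the central machine's $O(\sqrt{nk}\log k)$ memory budget is respected.
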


\begin{proof}
The expected number of elements in $S$ is $4 \sqrt{nk}$. By a Chernoff bound (Theorem~\ref{thm:chernoff}) the probability that $|S| < 3 \sqrt{nk}$ is at most $e^{-\Omega(\sqrt{nk})} \leq e^{-\Omega(k)}$. So we can assume that $|S| \geq 3 \sqrt{nk}$.
Let $N_S$ denote the number of elements of marginal at least $OPT/(2k)$ with respect to $G_0$. % obtained by Algorithm~\ref{alg:threshold-greedy} on the random sample $S$. 
The number of elements sent to $C$ in round two is exactly $N_S + |S|$. 

Consider breaking the sample set $S$ into $3k$ blocks of size $\sqrt{n/k}$ and processing each block sequentially. If before each block, there are at least $\sqrt{nk}$ remaining elements of marginal value at least $OPT / (2k)$, we have probability at least $1 - \left(1 - \sqrt{\frac{k}{n}}\right)^{\sqrt{\frac{n}{k}}} > 1/2 $ of adding an additional element to $G_0$. This happens conditioned on any prior history of the algorithm, since we can imagine that the blocks are sampled independently one at a time. Therefore, we can use a martingale argument to bound the number of elements selected in $S$. If $X_i$ is the indicator random variable for the event that at least one element is selected from the $i$-th block, then we have $E[X_i \mid X_1,\ldots,X_{i-1}] \geq 1/2$. Hence we can define $Y_i = \sum_{j=1}^{i} (X_i - 1/2)$ and the sequence $Y_1,Y_2,\ldots$ is a submartingale, which means $E[Y_i \mid Y_1,\ldots,Y_{i-1}] \geq Y_{i-1}$. Moreover, $|Y_{i} - Y_{i-1}| \leq 1$. By Azuma's inequality (Theorem~\ref{thm:azuma}), $\Pr[Y_{3k} < -\frac12 k] < e^{-\Omega(k)}$. This means that with probability $1-e^{-\Omega(k)}$, $\sum_{j=1}^{3k} X_j = Y_{3k} + \frac{3}{2} k \geq k$, and we include at least $k$ elements overall. In that case, we are done and do not send anything to the central machine.
Otherwise, the number of remaining elements of marginal value at least $OPT / (2k)$ drops below $\sqrt{nk}$. 
\end{proof}

\noindent{\bf Remaining issues.}
Since we do not know the exact value of $OPT$, we will need to guess the value within a factor of $\epsilon$ without increasing the number of rounds. This will increase memory usage on the central machine by a factor of $\frac{1}{\epsilon}\log k$. To do this, we classify the inputs into two classes: when a the input contains more than  $\sqrt{nk}$ elements of value at least $\frac{OPT}{2k}$, and when there are less than $\sqrt{nk}$ such elements. We call the former class of inputs ``dense'' and the latter class ``sparse''. For each input class, we design a 1/2-approximation in 2 rounds. Given the input, we can run both in parallel and return the better of the two solutions:  each machine simply runs both algorithms at the same time, keeping the number of machines the same. The full analysis is given in the Appendix, but we outline the algorithms below.

\paragraph*{A 2-round algorithm for ``dense'' inputs}
Let $v$ be the maximum value of a single element of the random sample $S$ in Algorithm~\ref{alg:one-half-two-round-with-guess}. When the input is dense, $v$ is likely to be at least $\frac{OPT}{2k}$ and at most $OPT$. A straightforward analysis shows that $\tau_j := v(1+\epsilon)^j$ is within a $(1+\epsilon)$ multiplicative factor of $OPT/2$ for some $j\in \{1,\ldots,\frac{1}{\epsilon}\log k\}$. Running Algorithm~\ref{alg:one-half-two-round-with-guess} with $\tau_j$ instead of $OPT/2$ produces an approximation of value at least $\frac{OPT}{2(1+\epsilon)} > \frac{OPT}{2}(1-\epsilon)$. Thus if each machine runs $\frac{1}{\epsilon}\log k$ copies of Algorithm~\ref{alg:one-half-two-round-with-guess}, the best solution must have value at least $\frac{OPT}{2}(1-\epsilon)$.

\paragraph*{A 2-round algorithm for ``sparse'' inputs}
Call an element $e$ ``large'' if $f(e) \geq \frac{OPT}{2k}$. The algorithm simply sends all the large elements of the input onto one machine and then runs a sequential algorithm in the second round. To get all the large elements onto one machine, we randomly partition the input set onto the $m$ machines, and then send the $O(k)$ largest elements on each machine to the central machine. On the central machine, we can run the same thresholding procedure as in the ``dense'' case to find a threshold close to $OPT/(2k)$. We then run a sequential version of Algorithm~\ref{alg:one-half-two-round-with-guess}.

In both the algorithms, $\epsilon$ can be taken to $\tilde{O}(\sqrt{k/n})$ without asymptotically increasing the memory, so we have a $(1/2-o(1)$-approximation.
%auto-ignore

\subsection{A \texorpdfstring{$1-\left(1-\frac{1}{t+1}\right)^t$}{1-(1-\frac{1}{t+1})^t} approximation in \texorpdfstring{$2t$}{2t}  rounds}

Here we show how our algorithm extends to $t$ thresholds. The number of MapReduce rounds becomes $2t+2$. This can be reduced to $2t$ using tricks similar to Section~\ref{sec:1/2}, but we omit this here. The approximation factor with $t$ thresholds is $1 - (1 - \frac{1}{t+1})^t$, which converges to $1-1/e$. We note that we need $\Theta(1/\epsilon)$ rounds to obtain a $(1-1/e-\epsilon)$-approximation, similar to Barbosa et al.~\cite{BENW16}, but in contrast we do not need any duplication of the ground set. Barbosa et al. does not specify the constant factor in $\Theta(1/\epsilon)$ but it seems that our dependence is better; a calculation yields that we need $(1+o(1)) / \epsilon$ rounds to get a $(1-1/e-\epsilon)$-approximation.

For now, we assume (as in Algorithm~\ref{alg:one-half-two-round-with-guess}) that we know the exact value of $OPT$. We deal with this assumption later. In a nutshell our algorithm works just like Algorithm 1 but with multiple thresholds used in a sequence. We set the threshold values as follows:
$$ \alpha_\ell = \left( 1 - \frac{1}{t+1} \right)^\ell \frac{OPT}{k} $$
for $1 \leq \ell \leq t$. (Note that for $t=1$, we get $\alpha_1 = \frac{OPT}{2k}$ as in Algorithm 1.)
For each threshold, we first select elements above the threshold from a random sample set, and then use this partial solution to prune the remaining elements. Finally, the solution at this threshold is completed on a central machine, and we proceed to the next threshold. The full description of the algorithm is presented in Algorithm~\ref{alg:t-rounds}. The analysis is as follows.

\begin{lemma}
The approximation ratio of Algorithm~\ref{alg:t-rounds} is at least $1 - \left(1 - \frac{1}{t+1}\right)^t$.
\end{lemma}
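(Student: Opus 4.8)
The plan is to reduce the distributed execution to a clean sequential invariant and then split into a ``did not fill up'' case and a ``filled up'' case, the latter being the crux. Write $r = 1 - \frac{1}{t+1}$, so that $\alpha_\ell = r^\ell\,\frac{OPT}{k}$ and $k\alpha_\ell = r^\ell\,OPT$. Let $G_\ell$ be the greedy solution maintained after the $\ell$-th threshold is processed, $V_\ell = f(G_\ell)$, $c_\ell = |G_\ell|$, and $G_0 = \emptyset$. First I would establish, exactly as in the proof of Lemma~\ref{lem:half-apx}, the structural invariant produced by the {\sc ThresholdGreedy}/{\sc ThresholdFilter} pair at each threshold: after threshold $\ell$, either $c_\ell = k$, or every $e \in V$ satisfies $f_{G_\ell}(e) < \alpha_\ell$. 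The one thing to verify is that this invariant survives the sample-and-filter step at each threshold just as it did for the single threshold in Section~\ref{sec:1/2}; granting this, the remaining analysis is purely combinatorial.

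The easy half is immediate: whenever $c_\ell < k$, monotonicity and submodularity give $OPT \le f(O \cup G_\ell) \le V_\ell + \sum_{e \in O} f_{G_\ell}(e) < V_\ell + k\alpha_\ell$, hence $V_\ell \ge (1 - r^\ell)\,OPT$. In particular, if $c_t < k$ we obtain $V_t \ge (1 - r^t)\,OPT$ and are done. So the entire content is the case where the solution fills up to $k$ elements, say first at threshold $\ell^\star \le t$; there we must prove the stronger bound $V_{\ell^\star} \ge (1 - r^t)\,OPT$, even though $\ell^\star$ may be much smaller than $t$.

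For the fill-up case, the naive move --- combine $V_{\ell^\star - 1} \ge (1 - r^{\ell^\star - 1})OPT$ with the $k - c_{\ell^\star-1}$ fresh elements of marginal $\ge \alpha_{\ell^\star}$ --- is exactly the step I expect to fail, because bounding $c_{\ell^\star-1} \le V_{\ell^\star-1}/\alpha_{\ell^\star-1}$ is too lossy (elements taken at early thresholds carry large marginals). A quick check at $t = \ell^\star = 3$ shows that this two-threshold argument undershoots the target. The fix is to feed in the guarantee at \emph{every} threshold at once. Since each of the $a_\ell := c_\ell - c_{\ell-1}$ elements taken at threshold $\ell$ has marginal $\ge \alpha_\ell$, we have $a_\ell \le (V_\ell - V_{\ell-1})/\alpha_\ell$, so summing the fill-up identity $\sum_{\ell=1}^{\ell^\star} a_\ell = k$ gives, after normalizing $OPT = 1$,
\[
\sum_{\ell=1}^{\ell^\star} (V_\ell - V_{\ell-1})\, r^{-\ell} \;\ge\; 1 .
\]

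The last step is a summation by parts. Rewriting the left side as $V_{\ell^\star} r^{-\ell^\star} - \frac{1}{t}\sum_{\ell=1}^{\ell^\star - 1} V_\ell\, r^{-\ell}$ (using $r^{-1} - 1 = \tfrac1t$), substituting the lower bounds $V_\ell \ge 1 - r^\ell$ for $\ell < \ell^\star$ (all coefficients are positive, so this is legitimate), and evaluating the resulting geometric sum yields
\[
V_{\ell^\star} \;\ge\; 1 + r^{\ell^\star} - r^{\ell^\star - 1} - \tfrac{\ell^\star - 1}{t}\, r^{\ell^\star} \;=\; (1 - r^t) + \Big( r^t - \tfrac{\ell^\star}{t+1}\, r^{\ell^\star - 1}\Big).
\]
It then remains to verify the elementary inequality $r^{\,t - \ell^\star + 1} \ge \frac{\ell^\star}{t+1}$ for $1 \le \ell^\star \le t$, which holds with equality at $\ell^\star = t$ and accounts for the tightness of the ratio. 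The main obstacle is thus entirely the fill-up case: one must propagate the value guarantee from all $\ell^\star - 1$ earlier thresholds simultaneously through the summation-by-parts step, since any argument using only the final one or two thresholds provably loses.
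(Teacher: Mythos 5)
Your proposal is correct, but it proves the fill-up case by a genuinely different route than the paper. Both arguments rest on the same structural invariant (after threshold $\ell$, either $|G_\ell|=k$ or every $e\in V$ has $f_{G_\ell}(e)<\alpha_\ell$ --- and your caveat about verifying that this survives the sample-and-filter step is exactly what the paper also defers to the Section~2.1 analysis), and your ``easy'' case is the paper's second case verbatim. The paper, however, never splits on the fill-up threshold at all: it runs a positional induction showing the first $\frac{\ell}{t}k$ \emph{selected elements} (fractionally counted) have value at least $(1-r^\ell)OPT$, and the fill-up case is silently absorbed because elements selected at earlier, higher thresholds have insertion marginals $\geq \alpha_\ell$, so the induction continues through $\ell = t$ even after $G$ is full. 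That makes the paper's proof shorter and uniform, at the cost of fractional-element bookkeeping. Your argument instead aggregates the counting identity $\sum_{\ell\le\ell^\star} a_\ell = k$ against $a_\ell \le (V_\ell - V_{\ell-1})/\alpha_\ell$, does Abel summation (your intermediate identities check out: $r^{-1}-1=\frac1t$ gives the rewriting, and the algebra $1+r^{\ell^\star}-r^{\ell^\star-1}-\frac{\ell^\star-1}{t}r^{\ell^\star} = (1-r^t)+\bigl(r^t-\frac{\ell^\star}{t+1}r^{\ell^\star-1}\bigr)$ is correct), and closes with the inequality $r^{t-\ell^\star+1}\ge\frac{\ell^\star}{t+1}$, which is exactly Bernoulli: with $j=t-\ell^\star+1$, $\bigl(1-\frac{1}{t+1}\bigr)^j \ge 1-\frac{j}{t+1}=\frac{\ell^\star}{t+1}$. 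What your route buys is transparency about tightness: equality at $\ell^\star=t$ matches the adversarial instance in the paper's optimality theorem (where the $\beta'_\ell$ are all $\frac{t+1}{t}$), and your explicit slack term $r^t-\frac{\ell^\star}{t+1}r^{\ell^\star-1}$ quantifies the bonus from early fill-up, which the paper's induction hides. Your side remark that the two-threshold ``naive'' bound provably undershoots (e.g.\ $17/32 < 37/64$ at $t=\ell^\star=3$) is also accurate and correctly identifies why all earlier thresholds must be fed in simultaneously --- the paper achieves the same effect through the per-threshold budget of $\frac{k}{t}$ positions rather than through your dual-style aggregation.
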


\begin{proof}
By induction, we prove the following statement: The value of the first $\frac{\ell}{t} k$ elements selected by the algorithm is at least $(1 - (1 - \frac{1}{t+1})^\ell) OPT$. (If $\frac{\ell}{t} k$ is not an integer, we count the marginal value of the $\lceil \frac{\ell}{k} k \rceil$-th selected element weighted by its respective fraction.)

Clearly this is true for $\ell=0$. Assume that the claim is true for $\ell-1$. We consider two cases. 

Either all the elements among the first $\lceil \frac{\ell}{t} k \rceil$ are selected above the $\alpha_\ell$ threshold. This means that since the value of the first $\frac{\ell-1}{t} k$ elements was at least $(1 - (1-\frac{1}{t+1})^{\ell-1}) OPT$, and the marginal value of each additional element is at least $\alpha_\ell$, the total value of the first $\frac{\ell}{k} OPT$ (with fractional elements counted appropriately) is at least
$$ \left(1 - \left(1-\frac{1}{t+1}\right)^{\ell-1} \right) OPT + \frac{1}{t} \cdot \left(1 - \frac{1}{t+1}\right)^\ell OPT  = \left( 1 - \left(1 - \frac{1}{t+1} \right)^\ell \right) OPT.$$

The other case is that not all these elements are selected above the $\alpha_\ell$ threshold, which means that if we denote by $S_\ell$ the set of the first $\lfloor \frac{\ell}{t} k \rfloor$ selected elements, then there are no elements with marginal value more than $\alpha_\ell$ with respect to $S_\ell$. But then for the optimal solution $O$, we get
$$ OPT - f(S_\ell) \leq f_{S_\ell}(O) \leq k \alpha_\ell = \left( 1 - \frac{1}{t+1} \right)^\ell OPT $$
which means that $f(S_\ell) \geq (1 - (1 - \frac{1}{t+1})^\ell) OPT.$

For $\ell=t$, we obtain the statement of the lemma.
\end{proof}

The probabilistic analysis of the number of pruned elements that need to be sent to the central machine is exactly the same as in Section~\ref{sec:1/2}. The requirement of knowing $OPT$ can be also handled in the same way --- we can use an extra initial round to determine the maximum-value element on the input, which gives us an estimate of the optimum within a factor of $k$. Then we can try $O(\frac{1}{\epsilon}\log k)$ different estimates of $OPT$ to ensure that one of them is within a relative error of $1+\epsilon$ of the correct value. Finally, we use an extra final round to choose the best of the solutions that we found for different estimates of $OPT$. Alternatively, we can use additional tricks as in Section~\ref{sec:1/2} to eliminate these 2 extra rounds, but we omit the details here.
%at the cost of increasing the space requirements by a factor of $\tilde{O}(1/\epsilon)$ %paul: i think the different estimates of OPT already accounts for the space blowup.

\begin{algorithm}[h]
  \DontPrintSemicolon
  %\KwIn{
  %A monotone submodular function $f : 2^V \rightarrow \mathbb{R}^{+}$, an input set $V$, a cardinality parameter $k$, and $OPT$.
  %}
  %\KwOut{
  %A set $S \subseteq V$ such that $f(S) \geq \left(1-\left(\frac{t}{t+1}\right)^t\right)OPT$ and $|S|\leq k$.
  %}
  \caption{
  A $2t$-round $1-\left(\frac{t}{t+1}\right)^t$ approximation, assuming $OPT$ is known.
  }
  $G \leftarrow \emptyset$\;
  \For{$\ell=1,\ldots\,t$}{
    \textbf{round $2\ell-1$:}\;
    $S,V_1,\ldots,V_m \leftarrow \textsc{PartitionAndSample}(V)$\;
    \On{each machine $M_i$ (in parallel)}{
      $G_0 \leftarrow \textsc{ThresholdGreedy}\left(S, G, \alpha_\ell\right)$\;
      
      \leIf{$|G_0|<k$}{
      $R_i\leftarrow \textsc{ThresholdFilter}\left(V_i, G_0, \alpha_\ell\right)$\;
      }{
      $R_i \leftarrow \emptyset$
      }
      send $R_i$ to a central machine $C$
    }
  \textbf{round $2\ell$ (only on $C$):}\;
    compute $G_0$ from $S$ as in first round\;
    $G \leftarrow \textsc{ThresholdGreedy}\left(\cup_i R_i, G_0, \alpha_\ell\right)$\;
  }
  \KwRet{$G$}
\label{alg:t-rounds}
\end{algorithm}

%\begin{theorem}
%Algorithm~\ref{alg:t-rounds} achieves a $(1 - (1 - \frac{1}{t+1})^t)$-approximation in $2t+2$ rounds.
%\end{theorem}

\section{Optimality of our choice of thresholds}
Here we present a proof that there is no way to modify the thresholding algorithm and achieve a better approximation factor with a different choice of thresholds.

\begin{theorem}
The thresholding algorithm with $t$ thresholds cannot achieve a factor better than $1 - \left(1 - \frac{1}{t+1}\right)^t$.
\end{theorem}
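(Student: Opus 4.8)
The plan is, for an arbitrary choice of the algorithm's $t$ thresholds, to construct a monotone submodular instance on which it collects at most $\left(1-\left(1-\frac{1}{t+1}\right)^t\right)OPT$; combined with the achievability lemma of the previous section this pins the optimal factor exactly at $1-\left(1-\frac{1}{t+1}\right)^t$. First I would normalize: since the algorithm only has access to $OPT$, its thresholds are constants times $OPT/k$, so write $\tau_\ell=c_\ell\,OPT/k$ and sort $c_1\ge c_2\ge\cdots\ge c_t\ge 0$ (permuting the phase thresholds into decreasing order only enlarges the set of reachable elements, so this is without loss). Put $q=1-\frac{1}{t+1}=\frac{t}{t+1}$. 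The feature I intend to exploit is that \textsc{ThresholdGreedy} accepts above-threshold elements in \emph{input order}, not in order of decreasing marginal; this is exactly what makes a fixed family of $t$ thresholds strictly weaker than unbounded greedy and lets the bound drop below $1-1/e$. The adversary is allowed to choose both the function and the order in which elements are presented.

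The construction I would use is a coverage-type (hence monotone submodular) instance with two parts. The true optimum is a set $O$ of $k$ pairwise ``disjoint'' elements with $f(O)=OPT$, which the algorithm will never assemble. The rest are ``trap'' elements fed first, laid out in $t$ bands so that within the $\ell$-th band every presented element sits just above the threshold $\tau_\ell$ while its marginal is propped up by the not-yet-covered part of the instance; because acceptance is in input order, the algorithm is forced to absorb a whole band of such mediocre elements before its marginals can fall below $\tau_\ell$. The single quantity I would engineer is a uniform $(t+1)$-fold overlap, so that \emph{any} one threshold, applied at a state with residual value $R=OPT-f(S)$, can harvest at most $\frac{1}{t+1}R$ of new value no matter how low it is set or how much budget it spends. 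This is the property that encodes the coarseness of using only $t$ thresholds, and it is tailored to the specific $t$ at hand.

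Granting that property, the analysis mirrors the achievability proof run as an upper bound. Writing $R_\ell=OPT-f(S_\ell)$ for the residual after phase $\ell$, the overlap bound gives
\[
 f(S_\ell)-f(S_{\ell-1})\le \frac{1}{t+1}\,R_{\ell-1}, \qquad R_\ell\ge \left(1-\frac{1}{t+1}\right)R_{\ell-1}=q\,R_{\ell-1}.
\]
Iterating over the (at most) $t$ phases yields $R_t\ge q^t\,OPT$, i.e.\ $f(S_t)\le(1-q^t)OPT$, for every threshold vector; the quantification over all $(c_\ell)$ is absorbed because the per-phase bound holds for any single threshold and any budget split. To see that $1-q^t$ is actually attained (so the bound is tight and not merely an upper estimate), I would check that the geometric thresholds $c_\ell=q^\ell$ make each band contribute its full $\frac{1}{t+1}R_{\ell-1}$ using exactly $k/t$ elements; an AM--GM computation confirms that equal budget shares across the $t$ bands maximize $1-\prod_\ell(\text{residual factor})$, recovering $1-q^t$.

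The hard part is establishing the uniform-overlap property as a genuine monotone submodular function together with an input order that forces it for \emph{every} threshold choice, i.e.\ closing all escape routes: a threshold set so high that few traps qualify must not let the algorithm reach and grab elements of $O$ (so $O$'s marginals must be suppressed once a band is absorbed), while a threshold set so low that it sweeps several bands at once must collapse to fewer effective phases, which by the iteration above is only worse for the algorithm. Getting the overlaps to yield exactly the factor $\frac{1}{t+1}$ per phase---rather than the naive $\frac{1}{t}$ one might expect from a pure budget-splitting argument---is the delicate calibration, and it is precisely the interplay between the $(t+1)$-fold overlap and the input-order acceptance rule that produces the $\frac{t}{t+1}$ contraction and hence the claimed $1-\left(1-\frac{1}{t+1}\right)^t$.
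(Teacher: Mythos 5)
Your plan hinges entirely on the ``uniform $(t+1)$-fold overlap'' property: that in your adversarial instance, any single threshold applied at a state with residual value $R$ harvests at most $\frac{1}{t+1}R$, \emph{``no matter how low it is set or how much budget it spends.''} For $t \geq 2$ no monotone submodular instance can have this property, so the key lemma you defer to the end is not merely delicate but false. The obstruction is the folklore argument of Lemma~\ref{lem:half-apx} in this very paper: running \textsc{ThresholdGreedy} with the single threshold $\tau = \frac{OPT}{2k}$ and full budget $k$ over the whole ground set always returns value at least $OPT/2$ --- either it fills all $k$ slots with elements of marginal at least $\tau$, or every remaining marginal drops below $\tau$ and monotonicity plus submodularity give $f(G) \geq OPT - k\tau = OPT/2$. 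Hence on \emph{every} instance a single phase can harvest half of the residual, not a $\frac{1}{t+1}$ fraction of it, and your iteration $R_\ell \geq \left(1-\frac{1}{t+1}\right)R_{\ell-1}$ cannot be forced phase-by-phase. Your own sanity check should have flagged this: you verify that the geometric thresholds achieve the full $\frac{1}{t+1}R_{\ell-1}$ using exactly $k/t$ elements per band, but a phase spending \emph{more} than $k/t$ elements can then harvest more than $\frac{1}{t+1}$ of its residual, contradicting a budget-independent cap. Relatedly, your diagnosis that the hardness stems from input-order (rather than best-first) acceptance is off target; in the paper's construction the order plays essentially no role.

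The fix is to replace the false uniform cap by a budget-parametrized one and couple the phases through the cardinality constraint, which is exactly what the paper does. Its instance has $k$ optimal elements of value $v^*$ plus $n_\ell = \left(\frac{\alpha_{\ell-1}}{\alpha_\ell}-1\right)k$ trap elements of value $\alpha_\ell$, with $f(S' \cup O') = \sum_{i \in S'} v_i + \left(1 - \frac{\sum_{i \in S'} v_i}{k v^*}\right)|O'| v^*$, so collecting trap value uniformly depresses the optimum's marginals and keeps the optimal elements just below each successive threshold. In this instance, a phase that contracts the residual by a factor $\beta_\ell$ (i.e.\ gains $\left(1 - \frac{1}{\beta_\ell}\right)R_{\ell-1}$) must consume $(\beta_\ell - 1)k$ elements of the budget; the constraint $\sum_\ell (\beta_\ell - 1) \leq 1$, equivalently $\sum_\ell \beta_\ell \leq t+1$, then yields via AM--GM that $\prod_\ell \beta_\ell \leq \left(\frac{t+1}{t}\right)^t$, with equality exactly at the even budget split $\beta_\ell = \frac{t+1}{t}$. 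The algorithm's value is $\left(1 - 1/\prod_\ell \beta_\ell\right)OPT \leq \left(1 - \left(\frac{t}{t+1}\right)^t\right)OPT$. So the correct per-phase inequality is $f(S_\ell)-f(S_{\ell-1}) \leq \left(1-\frac{1}{\beta_\ell}\right)R_{\ell-1}$ with the global constraint $\sum_\ell(\beta_\ell-1)\leq 1$, and the uniform contraction emerges only as the optimizer of this constrained product, not as a pointwise guarantee you can build into the instance.
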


\begin{proof}
Assume that the optimum $O$ consists of $k$ elements of total value $k v^*$. 
Since we are proving a hardness result, we can assume that the algorithm has this information and we can even let it choose $v^*$;
in the following, we denote this choice $v^* = \alpha_0$.
In addition, the algorithm chooses thresholds $\alpha_1 \geq \alpha_2 \geq \ldots \geq \alpha_t$.
It might be the case that $\alpha_0 < \alpha_1$, but then we can ignore all the thresholds above $\alpha_0$ and design our hard instance based on the thresholds below $\alpha_0$, which would reduce to a case with fewer thresholds. Thus we can assume $\alpha_0 \geq \alpha_1 \geq \ldots \geq \alpha_t$.

We design an adversarial instance as follows. In addition to the $k$ elements of value $v^*$, we have a set $S$ of other elements where element $i$ has value $v_i$, such that $\sum_{i \in S} v_i \leq k v^*$. The objective function is defined as follows: for $O' \subseteq O$ and $S' \subseteq S$,
$$ f(S' \cup O') = \sum_{i \in S'} v_i + \left(1 - \frac{\sum_{i \in S'} v_i}{k v^*} \right) |O'| v^*. $$
It is easy to verify that this is a monotone submodular function. (It can be realized as a coverage function, which we leave as an exercise.)

Now we specify more precisely the values of elements in $S$. We will have $n_\ell$ elements of value $\alpha_\ell$, for each $1 \leq \ell \leq t$. The idea is that the algorithm will pick these $n_\ell$ elements at threshold value $\alpha_\ell$, at which point the marginal value of the optimal elements drops below $\alpha_\ell$, so we have to move on to the next threshold. A computation yields that we should have $n_\ell = (\frac{\alpha_{\ell-1}}{\alpha_\ell} - 1) k$.\footnote{We ignore the issue that $n_\ell$ might not be an integer. For large $k$, it is easy to see that the rounding errors are negligible.} The total value of these elements is $\sum_{i \in S} v_i = \sum_{\ell=1}^{t} n_\ell \alpha_\ell = \sum_{\ell=1}^{t} (\alpha_\ell - \alpha_{\ell-1}) k = (\alpha_0 - \alpha_t) k \leq v^* k$ as required above.

Then, assuming that the marginal value of the optimum after processing $\ell-1$ thresholds was $\alpha_{\ell-1} k$, the marginal value after processing the $\ell$-th threshold will be $\alpha_{\ell-1} k - n_\ell \alpha_\ell = \alpha_\ell k$. By induction, the algorithm selects exactly $n_\ell$ elements of value $\alpha_\ell$, unless the constraint of $k$ selected elements is reached. Let us denote by $n'_\ell$ the actual number of elements selected by the algorithm at threshold level $\alpha_\ell$. We have $n'_\ell \leq n_\ell$, and $\sum_{\ell=1}^{t} n'_\ell \leq k$, as discussed above.

The total value collected by the algorithm is $\sum_{\ell=1}^{t} n'_\ell \alpha_\ell$. Since we have $n'_\ell \leq n_\ell = (\frac{\alpha_{\ell-1}}{\alpha_\ell} - 1) k$, and $\alpha_\ell \leq \alpha_{\ell-1}$, we can define inductively $\alpha'_0 = \alpha_0$ and $\alpha'_\ell \geq \alpha_\ell$ such that $n'_\ell = (\frac{\alpha'_{\ell-1}}{\alpha'_\ell} - 1) k$. Then the total value collected by the algorithm is 
$$\sum_{\ell=1}^{t} n'_\ell \alpha_\ell \leq \sum_{\ell=1}^{t} n'_\ell \alpha'_\ell
 = \sum_{\ell=1}^{t} (\alpha'_{\ell-1} - \alpha'_\ell) k = (\alpha'_0 - \alpha'_t) k. $$

Let us denote $\beta'_\ell = \frac{\alpha'_{\ell-1}}{\alpha'_\ell}$. We have $\sum_{\ell=1}^{t} (\beta'_\ell - 1) = \frac{1}{k} \sum_{\ell=1}^{t} n'_\ell \leq 1$, hence $\sum_{\ell=1}^{t} \beta'_\ell \leq t+1$. Recall that the value achieved by the algorithm is $\sum_{\ell=1}^{t} n'_\ell \alpha_\ell \leq (\alpha'_0 - \alpha'_t) k = (1 - 1 / \prod_{\ell=1}^{t} \beta'_\ell) v^* k$. By the AMGM inequality, $\prod_{\ell=1}^{t} \beta'_\ell$ is maximized subject to $\sum_{\ell=1}^{t} \beta'_\ell \leq t+1$ when all the $\beta'_\ell$ are equal, $\beta'_\ell = \frac{t+1}{t}$. Then, the value achieved by the algorithm is $(1 - 1 / \prod_{\ell=1}^{t} \beta'_\ell) v^* k = (1 - (\frac{t}{t+1})^t) OPT$.
\end{proof}

\bibliography{refs}

\pagebreak

\appendix
\section{Appendix}
%auto-ignore

In Algorithm~\ref{alg:one-half-two-round-dense-class}, we design a 2-round $1/2-\epsilon$ approximation for ``dense'' inputs.
\begin{algorithm}[t]
  \DontPrintSemicolon
  \caption{
  A $1/2-\epsilon$ approximation in 2 rounds for ``dense'' inputs.
  }
  \textbf{round 1:}\;
  $S,V_1,\ldots,V_m \leftarrow \textsc{PartitionAndSample}(V)$\;
  \On{each machine $M_i$ (in parallel)}{
    $v \leftarrow \max_{e\textrm{ on $M_i$}} f(\{e\})$\;
    \For{$j=1,\ldots, \frac{1}{\epsilon}\log k$}{
      $\tau_j \leftarrow v(1+\epsilon)^j/k$\;
      $G_{0, j} \leftarrow \textsc{ThresholdGreedy}\left(S, \emptyset, \tau_j\right)$\;
	  
      \leIf{$|G_{0,j}|<k$}{
      $R_{i,j} \leftarrow \textsc{ThresholdFilter}\left(V_i, G_{0,j}, \tau_j\right)$\;
      }{
      $R_{i,j} \leftarrow \emptyset$
      }
    }
    send all $R_{i,j}$ to a central machine $C$
  }
\textbf{round 2 (only on $C$):}\;
  compute $v$, $\tau_j$, and $G_{0,j}$ from $S$ as in first round\;
  \For{$j=1,\ldots, \frac{1}{\epsilon}\log k$}{
    $G_j \leftarrow \textsc{ThresholdGreedy}\left(\cup_i R_{i,j}, G_{0,j}, \tau_j\right)$\;
  }
  $G^\star = \text{argmax}_{G_j} f(G_j)$\;
  \KwRet{$G^\star$}
\label{alg:one-half-two-round-dense-class}
\end{algorithm}

\begin{lemma}
\label{lem:close-guess}
Algorithm~\ref{alg:one-half-two-round-dense-class} returns a $1/2-\epsilon$ approximation.
\end{lemma}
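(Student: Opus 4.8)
The plan is to reduce the claim to two facts and then combine them. First, that for a ``dense'' input the quantity $v = \max_{e \in S} f(\{e\})$ computed from the random sample satisfies $\frac{OPT}{2k} \le v \le OPT$ with probability $1 - e^{-\Omega(k)}$; and second, that whenever $v$ lies in this range, one of the geometric thresholds $\tau_j = v(1+\epsilon)^j/k$ with $j \in \{1,\dots,\frac{1}{\epsilon}\log k\}$ is a good enough proxy for $\frac{OPT}{2k}$ that running the thresholding procedure of Algorithm~\ref{alg:one-half-two-round-with-guess} with that $\tau_j$ returns a set of value at least $\frac{OPT}{2}(1-\epsilon)$. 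Since $G^\star$ is the best $G_j$ over all $j$, this yields the stated $(1/2-\epsilon)$-approximation.

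For the first fact I would reuse the Chernoff argument of Lemma~\ref{lem:memory-use-2-round-with-guess}. The upper bound $v \le OPT$ is immediate, as every singleton is a feasible solution. For the lower bound, ``dense'' means there are at least $\sqrt{nk}$ elements $e$ with $f(\{e\}) \ge \frac{OPT}{2k}$; since each is sampled independently into $S$ with probability $p = 4\sqrt{k/n}$, the expected number of such elements landing in $S$ is at least $4k$, so by Theorem~\ref{thm:chernoff} at least one is sampled with probability $1 - e^{-\Omega(k)}$, on which event $v \ge \frac{OPT}{2k}$. I would also note that, because $v$ is recomputed from $S$ (sent to every machine and to $C$) in a fixed processing order, every $G_{0,j}$ is identical across all machines and on $C$, so the correctness reasoning of Lemma~\ref{lem:half-apx} applies verbatim to each threshold.

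The core is a thresholded restatement of Lemma~\ref{lem:half-apx}: for any fixed $\tau$, the returned set $G_j$ satisfies $f(G_j) \ge k\tau$ when $|G_j| = k$, and otherwise $f_{G_j}(e) < \tau$ for every $e \in V$, so monotonicity and submodularity give $OPT \le f(G_j) + k\tau$ and hence $f(G_j) \ge OPT - k\tau$. Thus $f(G_j) \ge \min(k\tau,\, OPT - k\tau)$, which is at least $\frac{OPT}{2}(1-\epsilon)$ as soon as $\tau \in [\frac{OPT}{2k}(1-\epsilon),\, \frac{OPT}{2k}(1+\epsilon)]$. Because $\frac{OPT}{2v} \in [\frac12, k]$, the needed exponent $\log_{1+\epsilon}\frac{OPT}{2v}$ lies in $\{0,\dots,\frac{1}{\epsilon}\log k\}$, and the multiplicative spacing $(1+\epsilon)$ of the grid is finer than the width of this target window, so some $\tau_{j^\star}$ lands inside it.

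The step I expect to be the main obstacle is the boundary case where $v$ is so large that even the smallest grid value $\tau_1 = v(1+\epsilon)/k$ already exceeds $\frac{OPT}{2k}$; this occurs exactly when $v > \frac{OPT}{2(1+\epsilon)}$, i.e.\ when the single best element is itself nearly a $\frac12$-approximation, and here no grid point sits in the window while the bound $OPT - k\tau_1$ may be vacuous. I would dispatch it by arguing directly at $\tau_1$: if $|G_1| = k$ then $f(G_1) \ge k\tau_1 = v(1+\epsilon) > \frac{OPT}{2}$, and otherwise the maximizer $e^\star \in S$ has $f_{G_1}(e^\star) < \tau_1$, so $v = f(\{e^\star\}) \le f(G_1 \cup \{e^\star\}) \le f(G_1) + \tau_1$, whence $f(G_1) \ge v\bigl(1 - \tfrac{1+\epsilon}{k}\bigr)$, which exceeds $\frac{OPT}{2}(1-\epsilon)$ up to lower-order terms in $1/k$. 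Combining the two regimes and maximizing over $j$ gives $f(G^\star) \ge \frac{OPT}{2}(1-\epsilon)$; rescaling $\epsilon$ by a constant absorbs the $O(1/k)$ slack and yields the stated $1/2 - \epsilon$ bound.
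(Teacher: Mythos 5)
Your proposal is correct and follows essentially the same route as the paper's proof: reduce to the folklore threshold guarantee $f(G_j)\ge\min\left(k\tau_j,\,OPT-k\tau_j\right)$ from Lemma~\ref{lem:half-apx}, and show that on a ``dense'' input $v\in[\frac{OPT}{2k},OPT]$ with high probability, so that the geometric grid $\tau_j=v(1+\epsilon)^j/k$ with $j\in\{1,\ldots,\frac{1}{\epsilon}\log k\}$ contains a point within a $(1+\epsilon)$ multiplicative factor of $\frac{OPT}{2k}$. You are in fact more thorough than the paper, whose terse proof asserts the range of $v$ without the Chernoff computation and silently skips the boundary regime $v>\frac{OPT}{2(1+\epsilon)}$, where no grid point lands in the target window; your direct argument at $\tau_1$ via the maximizer $e^\star\in S$ (giving $f(G_1)\ge v(1-\frac{1+\epsilon}{k})$ when $|G_1|<k$) correctly closes that gap, at the cost of an $O(1/k)$ additive loss absorbed into $\epsilon$.
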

\begin{proof}
Note that Algorithm~\ref{alg:one-half-two-round-dense-class} essentially runs Algorithm~\ref{alg:one-half-two-round-with-guess} with $O(\frac{1}{\epsilon}\log k)$ guesses for $OPT/2$. To get a $1/2-\epsilon$ approximation, one of these guesses needs to be within a multiplicative factor of $1+\epsilon$ from $\frac{OPT}{2k}$. By the denseness assumption on the input, we know $\frac{OPT}{2k}\leq v \leq OPT$ with high probability. Suppose we try $j=1,\ldots,\ell$ for some number of thresholds $\ell$. For one of the $\tau_j$'s to be within a multiplicative factor of $1+\epsilon$ from $\frac{OPT}{2k}$, we require $\ell \geq \log\left(\frac{OPT}{2kv\log(1+\epsilon)}\right) \geq \frac{1}{\epsilon}$.
\end{proof}

\begin{lemma}
The number of elements sent to the central machine is $O\left(\frac{1}{\epsilon}\sqrt{nk} \log k\right)$.
\end{lemma}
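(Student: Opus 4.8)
The plan is to reduce this to a per-threshold application of the memory bound already established in Lemma~\ref{lem:memory-use-2-round-with-guess}, together with a union bound over the $\frac{1}{\epsilon}\log k$ guessed thresholds. First I would observe that the elements arriving at the central machine $C$ come from two sources: the sample $S$, which is delivered to $C$ once by \textsc{PartitionAndSample} and has size $O(\sqrt{nk})$ with probability $1 - e^{-\Omega(\sqrt{nk})}$ by the same Chernoff bound (upper tail) used before; and the filtered sets $R_{i,j}$, whose total size for a fixed level $j$ is exactly the number $N_S^{(j)}$ of elements of $V$ whose marginal with respect to $G_{0,j}$ is at least $\tau_j$. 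Thus the quantity to bound is $|S| + \sum_{j=1}^{(1/\epsilon)\log k} N_S^{(j)}$.

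Next I would bound each $N_S^{(j)}$ separately. Fixing a level $j$, the situation is identical to that of Lemma~\ref{lem:memory-use-2-round-with-guess}: either the greedy solution $G_{0,j}$ built from $S$ already reaches size $k$ (in which case each $R_{i,j}=\emptyset$ and nothing is sent for that level), or the same block-martingale argument shows that after processing all $3k$ blocks of $S$ the number of surviving high-marginal elements has dropped below $\sqrt{nk}$, except with probability $e^{-\Omega(k)}$. Hence $N_S^{(j)} < \sqrt{nk}$ with probability $1 - e^{-\Omega(k)}$ for each $j$. Taking a union bound over the $\frac{1}{\epsilon}\log k$ levels, and noting that $\frac{1}{\epsilon}\log k$ is polynomially bounded (recall $\frac{1}{\epsilon}$ can be taken as large as $\tilde{O}(\sqrt{n/k})$), the overall failure probability stays $e^{-\Omega(k)}$ whenever $k=\omega(\log n)$. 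Summing then gives $|S| + \sum_j N_S^{(j)} = O(\sqrt{nk}) + \frac{1}{\epsilon}\log k \cdot \sqrt{nk} = O\!\left(\frac{1}{\epsilon}\sqrt{nk}\log k\right)$, as claimed.

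The step I expect to require the most care is the per-threshold application of Lemma~\ref{lem:memory-use-2-round-with-guess}, because the thresholds $\tau_j = v(1+\epsilon)^j/k$ are not fixed in advance: they depend on $v = \max_{e\in S} f(\{e\})$, which is a function of the \emph{entire} sample $S$. This data-dependence is in tension with the block-martingale argument, which needs the number of remaining high-marginal elements to be determined by the already-processed prefix of $S$, since otherwise the conditional expectations $E[X_i \mid X_1,\ldots,X_{i-1}] \geq 1/2$ are no longer justified. To make the argument rigorous I would first condition on the value of $v$ (equivalently, reveal the maximizing element and its value up front); given $v$ the thresholds become fixed and the martingale argument goes through verbatim. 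Alternatively, since $v$ takes at most $n$ distinct values, one can union-bound directly over the at most $n\cdot\frac{1}{\epsilon}\log k$ possible threshold values, which still leaves the total failure probability at $e^{-\Omega(k)}$ under the same mild assumption $k=\Omega(\log n)$.

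Everything else is routine arithmetic, so the technical heart of the argument is simply verifying that the guarantee of Lemma~\ref{lem:memory-use-2-round-with-guess} is robust to running many thresholds in parallel off the \emph{same} random sample $S$, rather than proving a new concentration statement from scratch.
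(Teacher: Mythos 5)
Your proposal is correct and follows the same route as the paper: the paper's entire proof of this lemma is the single sentence that the bound follows from Lemma~\ref{lem:memory-use-2-round-with-guess} together with the fact that there are only $\frac{1}{\epsilon}\log k$ thresholds, which is exactly your decomposition $|S| + \sum_j N_S^{(j)} = O(\sqrt{nk}) + \frac{1}{\epsilon}\log k \cdot O(\sqrt{nk})$. What you add beyond the paper is the observation that Lemma~\ref{lem:memory-use-2-round-with-guess} does not apply verbatim per threshold, since $\tau_j = v(1+\epsilon)^j/k$ depends on $v = \max_{e\in S} f(\{e\})$ and hence on the whole sample --- precisely the adaptivity to which the block-martingale argument is sensitive --- and the paper silently glosses over this (as well as over the union bound across levels). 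Of your two proposed fixes, prefer the second: union-bounding over the at most $n\cdot\frac{1}{\epsilon}\log k$ candidate threshold values is airtight, because for each \emph{fixed} threshold the martingale argument applies verbatim, and the polynomial blow-up in failure probability is absorbed into $e^{-\Omega(k)}$ once $k \geq C\log n$ for a suitable constant. The conditioning route is more delicate than you suggest: conditioning on the value (or identity) of the sample maximum excludes from $S$ every element of singleton value exceeding $v$, so the conditional law of $S$ is no longer an independent $p$-sample of $V$; elements of value above $v$ can still have high marginals and survive the filter, and the block argument does not control them under this conditioning, so the claim that it ``goes through verbatim'' is not quite right. One further bookkeeping point you handle correctly: the literal statement of Lemma~\ref{lem:memory-use-2-round-with-guess} (``at most $\sqrt{nk}$'') cannot include $|S|$ itself, whose expected size is $4\sqrt{nk}$; your accounting, which bounds $|S|$ once by Chernoff and each $N_S^{(j)}$ by $\sqrt{nk}$, is the right reading and still yields the claimed $O\left(\frac{1}{\epsilon}\sqrt{nk}\log k\right)$.
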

\begin{proof}
This follows from Lemma~\ref{lem:memory-use-2-round-with-guess} and the fact that there are only $\frac{\log k}{\epsilon}$ thresholds.
\end{proof}

Next, we design a 2-round $1/2-\epsilon$ approximation for ``sparse'' inputs (Algorithm~\ref{alg:one-half-two-round-sparse-class}).
\begin{algorithm}[t]
  \DontPrintSemicolon
  \caption{
  A $1/2-\epsilon$ approximation in 2 rounds for ``sparse'' inputs.
  }
  \textbf{round 1:}\;
  partition $V$ uniformly at random to the $m$ machines\;
  \On{each machine $M_i$}{
  	send its $O(k)$ largest elements to a central machine $C$\;
  }
  
  \textbf{round 2 (only on $C$):}\;
  $S \leftarrow \textrm{all elements sent to $C$}$\;
  $v \leftarrow \max_{e\in S} f(\{e\})$\;
  \For{$j=1,\ldots, \frac{1}{\epsilon}\log k$}{
  	$\tau_j \leftarrow v(1+\epsilon)^j/k$\;
  	$G_j \leftarrow \textsc{ThresholdGreedy}\left(S, \emptyset, \tau_j\right)$\;
  }
  $G^\star = \text{argmax}_{G_j} f(G_j)$\;
  \KwRet{$G^\star$}
\label{alg:one-half-two-round-sparse-class}
\end{algorithm}

%When the input set is ``sparse'', the algorithm is extremely simple. It shuffles all the large elements of the input onto one machine, and performs Algorithm~\ref{alg:one-half-two-round-with-guess} on it by guessing $\frac{\log k}{\epsilon}$ thresholds.

\begin{lemma}
Algorithm~\ref{alg:one-half-two-round-sparse-class} gives a $1/2-\epsilon$ approximation.
\end{lemma}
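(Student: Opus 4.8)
The plan is to mirror the threshold analysis of the dense case (Lemma~\ref{lem:close-guess}), the only genuinely new ingredient being the argument that the central machine gets to see every element that matters. Call an element $e$ \emph{large} if $f(\{e\}) \geq \frac{OPT}{2k}$, and let $L$ denote the set of large elements; by the sparseness assumption $|L| < \sqrt{nk}$. First I would establish, via concentration, that forwarding only the $O(k)$ largest elements per machine loses no large element. Since $V$ is partitioned uniformly at random among the $m = \sqrt{n/k}$ machines, the expected number of large elements on any fixed machine is $|L|/m < \sqrt{nk}\cdot\sqrt{k/n} = k$. Choosing the per-machine forwarding budget to be a sufficiently large constant multiple of $k$, a Chernoff bound (Theorem~\ref{thm:chernoff}) shows that the number of large elements on a fixed machine exceeds this budget with probability only $e^{-\Omega(k)}$; a union bound over the $m \leq \sqrt{n}$ machines then gives that, provided $k = \Omega(\log n)$, with high probability no machine holds more large elements than its budget. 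Hence forwarding the $O(k)$ largest elements per machine delivers every large element to $C$, so the collected set $S$ satisfies $S \supseteq L$.

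The second step converts $S \supseteq L$ into a statement about all of $V$. Any $e \in V \setminus S$ failed to be among the $O(k)$ largest elements on its machine and is therefore not large, so $f(\{e\}) < \frac{OPT}{2k}$; by submodularity $f_G(e) \leq f(\{e\}) < \frac{OPT}{2k}$ for every set $G$. Thus, even though $C$ examines only $S$, the property ``no surviving element has marginal at least the threshold'' extends automatically from $S$ to all of $V$ once the threshold is at least $\frac{OPT}{2k}$.

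With this in hand, the third step is exactly the argument of Lemma~\ref{lem:half-apx} applied to the sequential run of {\sc ThresholdGreedy} on $S$ from $\emptyset$ at a threshold $\tau \approx \frac{OPT}{2k}$. If the greedy set $G$ reaches size $k$ we have collected $k$ elements of marginal at least $\tau$, for value at least $k\tau \approx OPT/2$; otherwise no element of $S$ — and, by the previous paragraph, no element of $V$ — has marginal at least $\tau$ with respect to $G$, and monotonicity together with submodularity give $OPT \leq f(G) + k\tau \approx f(G) + OPT/2$, hence $f(G) \geq OPT/2$. Because $OPT$ is unknown, we sweep $\tau_j = v(1+\epsilon)^j/k$ over $j = 1,\ldots,\frac{1}{\epsilon}\log k$, where $v = \max_{e\in S} f(\{e\})$. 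Since the globally maximal element has value at least $OPT/k \geq \frac{OPT}{2k}$ it is large and hence lies in $S$, so $v$ equals the global maximum and satisfies $\frac{OPT}{k} \leq v \leq OPT$. The threshold sweep is then identical to Lemma~\ref{lem:close-guess}: one of the $\tau_j$ lies within a factor $1+\epsilon$ of $\frac{OPT}{2k}$, degrading the bound to $\frac{OPT}{2(1+\epsilon)} > \left(\frac{1}{2} - \epsilon\right)OPT$.

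The main obstacle is the first step. The entire argument rests on the claim that keeping only the top $O(k)$ elements per machine discards no large element, and this is precisely where the sparseness bound $|L| < \sqrt{nk}$ must interact correctly with the random partition: the expected load of large elements per machine is at most $k$, and a mild lower bound such as $k = \Omega(\log n)$ is needed so that the per-machine failure probability $e^{-\Omega(k)}$ survives the union bound over all $m$ machines. Once $S \supseteq L$ is secured, the remainder is a direct reprise of Lemmas~\ref{lem:half-apx} and~\ref{lem:close-guess}.
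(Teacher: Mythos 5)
Your proof is correct and takes essentially the same route as the paper's: the paper likewise splits the argument into (i) one $\tau_j$ landing within a $(1+\epsilon)$ factor of $OPT/(2k)$ by the reasoning of Lemma~\ref{lem:close-guess}, and (ii) a balls-and-bins concentration argument showing that every large element reaches the central machine, after which the argument of Lemma~\ref{lem:half-apx} applies. You merely spell out what the paper compresses into ``a standard balls-and-bins analysis'' --- the Chernoff-plus-union-bound computation with its implicit $k = \Omega(\log n)$ requirement, and the submodularity bridge $f_G(e) \leq f(\{e\}) < \frac{OPT}{2k}$ extending the stopping condition from $S$ to all of $V$.
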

\begin{proof}
There are two things to check: that one of the $\tau_j$'s is close to $OPT/2$, and that the machine $C$ is not missing any of the elements that it needs. 

For the former, its clear that by similar reasoning to Lemma~\ref{lem:close-guess}, one of the $\tau_j$'s will be within a $1+\epsilon$ multiplicative factor to $OPT/2$. 

For the latter, note that the ``sparseness'' assumption implies that with high probability, the $\sqrt{nk}$ large elements will be equally distributed among the machines, and each machine will get $k$ elements in expectation. Since we send $O(k)$ elements to the central machine, $C$ will have all the large elements in $V$ with high probability. This can be shown by a standard balls-and-bins analysis~\cite{RS98}.
\end{proof}

Finally, we note that the total memory use on $C$ is $O(\sqrt{nk})$ since each machine sends $O(k)$ elements and there are $O(\sqrt{\frac{n}{k}})$ machines in total.

\begin{theorem}
By running Algorithms~\ref{alg:one-half-two-round-dense-class} and~\ref{alg:one-half-two-round-sparse-class} in parallel, we have a 2-round $1/2-\epsilon$ approximation.
\end{theorem}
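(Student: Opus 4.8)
The plan is to combine the two classification-specific algorithms into a single algorithm that handles arbitrary inputs, and then argue that the combination inherits both the approximation guarantee and the memory bound of its components. First I would observe that every input falls into exactly one of the two classes defined earlier: either the sample $S$ contains at least $\sqrt{nk}$ elements of marginal value at least $\frac{OPT}{2k}$ (``dense''), or it contains fewer than that many (``sparse''). The key structural point is that each machine has enough memory to run both Algorithm~\ref{alg:one-half-two-round-dense-class} and Algorithm~\ref{alg:one-half-two-round-sparse-class} simultaneously, since each uses $O(\sqrt{nk})$ memory per machine and the central machine's usage is $O\!\left(\frac{1}{\epsilon}\sqrt{nk}\log k\right) = \tilde{O}(\sqrt{nk})$, which fits the relaxed central-machine budget of the model. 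Running both in parallel keeps the number of machines and the number of rounds unchanged at $2$, and at the end we simply return $\arg\max$ over the two candidate solutions.

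The approximation guarantee then follows by a case split on the (unknown) true class of the input. If the input is dense, Lemma~\ref{lem:close-guess} guarantees that the dense algorithm produces a solution of value at least $\left(\frac12 - \epsilon\right)OPT$, and since we return the better of the two solutions, the combined output is at least this good. Symmetrically, if the input is sparse, the sparse-input lemma guarantees the same bound for that branch. In either case the returned solution has value at least $\left(\frac12 - \epsilon\right)OPT$, so the combined algorithm is a $\frac12 - \epsilon$ approximation regardless of which class the input belongs to; crucially, the algorithm never needs to know which case it is in, since taking the maximum is safe.

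The main obstacle I anticipate is making the ``run both in parallel'' step rigorous within the memory constraints of the $\mathcal{MRC}$ model, rather than the approximation argument, which is essentially immediate once the lemmas are in hand. Specifically I would need to check that the messages generated by the two sub-algorithms can coexist on the central machine within the $\tilde{O}(\sqrt{nk})$ budget: the dense branch sends $O\!\left(\frac{1}{\epsilon}\sqrt{nk}\log k\right)$ elements and the sparse branch sends $O(\sqrt{nk})$ elements, so their union is still $\tilde{O}(\sqrt{nk})$, and the per-machine output bound in round~$1$ is likewise preserved up to the $\frac{1}{\epsilon}\log k$ factor. Finally, I would note that the high-probability events underlying both lemmas (the Chernoff/martingale bound of Lemma~\ref{lem:memory-use-2-round-with-guess} and the balls-and-bins bound for the sparse case) each fail with probability $e^{-\Omega(k)}$ or smaller, so a union bound leaves the combined algorithm correct with high probability, completing the argument.
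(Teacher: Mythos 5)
Your proposal is correct and matches the paper's (largely implicit) argument exactly: run both algorithms in parallel, return the better solution, and observe that whichever class the input falls into, the corresponding lemma (Lemma~\ref{lem:close-guess} for dense inputs, the sparse-input lemma for sparse ones) guarantees a $\frac12-\epsilon$ approximation, while the memory bounds of both branches fit within the central machine's $\tilde{O}(\sqrt{nk})$ budget. One minor slip: the paper defines density by the number of high-value elements in the \emph{input} $V$, not in the sample $S$, but this does not affect the validity of your case analysis.
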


\section{Auxiliary Results}
\begin{theorem}[Chernoff bound]
   \label{thm:chernoff}
   Let $X_1, \ldots, X_n$ be independent random variables such that $X_i \in [0,1]$ with probability 1. Define $X = \sum_{i=1}^n X_i$ and let $\mu = \mathbb{E} X$.
   Then, for any $\epsilon > 0$, we have
   \[
      \Pr[X \geq (1 + \epsilon) \mu ] \leq \exp\left( - \frac{\min\{\epsilon, \epsilon^2\} \mu}{3} \right).
   \]
\end{theorem}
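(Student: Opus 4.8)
The plan is to use the standard exponential-moment (Chernoff) method: introduce a free parameter $t > 0$, pass to the moment generating function via Markov's inequality, factor it using independence, and finally optimize over $t$. Concretely, for any $t > 0$ I would write
\[
\Pr[X \geq (1+\epsilon)\mu] = \Pr\left[e^{tX} \geq e^{t(1+\epsilon)\mu}\right] \leq e^{-t(1+\epsilon)\mu}\,\mathbb{E}[e^{tX}],
\]
and then use independence of the $X_i$ to split $\mathbb{E}[e^{tX}] = \prod_{i=1}^n \mathbb{E}[e^{tX_i}]$, reducing everything to bounding a single factor.

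The key step exploits the hypothesis $X_i \in [0,1]$. Since $x \mapsto e^{tx}$ is convex, on $[0,1]$ it lies below the chord through $(0,1)$ and $(1,e^t)$, giving the pointwise bound $e^{tX_i} \leq 1 + X_i(e^t - 1)$. Taking expectations and then applying $1 + y \leq e^y$ yields $\mathbb{E}[e^{tX_i}] \leq 1 + \mathbb{E}[X_i](e^t-1) \leq e^{\mathbb{E}[X_i](e^t-1)}$. Multiplying over $i$ collapses the product into $\mathbb{E}[e^{tX}] \leq e^{\mu(e^t-1)}$, so that
\[
\Pr[X \geq (1+\epsilon)\mu] \leq \exp\left(\mu\bigl[(e^t-1) - t(1+\epsilon)\bigr]\right).
\]

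Next I would optimize the exponent over $t$. Differentiating $(e^t-1) - t(1+\epsilon)$ and setting it to zero gives $e^t = 1+\epsilon$, i.e. the admissible choice $t = \ln(1+\epsilon) > 0$. Substituting produces the canonical multiplicative bound
\[
\Pr[X \geq (1+\epsilon)\mu] \leq \left(\frac{e^\epsilon}{(1+\epsilon)^{1+\epsilon}}\right)^{\mu} = \exp\left(-\mu\bigl[(1+\epsilon)\ln(1+\epsilon) - \epsilon\bigr]\right).
\]

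It remains to convert this into the stated form, which amounts to proving the elementary inequality $(1+\epsilon)\ln(1+\epsilon) - \epsilon \geq \tfrac{1}{3}\min\{\epsilon,\epsilon^2\}$ for all $\epsilon > 0$. I expect this final simplification to be the main (though routine) obstacle, since it splits into the two branches of the $\min$. For $0 < \epsilon \leq 1$, a Taylor expansion gives $(1+\epsilon)\ln(1+\epsilon) - \epsilon = \tfrac{\epsilon^2}{2} - \tfrac{\epsilon^3}{6} + \cdots \geq \tfrac{\epsilon^2}{2} - \tfrac{\epsilon^3}{6} \geq \tfrac{\epsilon^2}{3}$, using $\epsilon^3 \leq \epsilon^2$. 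For $\epsilon > 1$, the function $h(\epsilon) = (1+\epsilon)\ln(1+\epsilon) - \tfrac{4}{3}\epsilon$ has derivative $\ln(1+\epsilon) - \tfrac13 > 0$ on $[1,\infty)$ and is positive at $\epsilon = 1$ (where $2\ln 2 > \tfrac43$), so it stays positive, giving $(1+\epsilon)\ln(1+\epsilon) - \epsilon \geq \tfrac{\epsilon}{3}$. Combining the two branches completes the proof; the only real care needed is matching the cases cleanly across the crossover at $\epsilon = 1$.
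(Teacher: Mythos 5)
Your proof is correct, and there is nothing to compare it against: the paper states this Chernoff bound as an unproved auxiliary result, implicitly invoking the standard argument. Your write-up is exactly that canonical proof (Markov on the moment generating function, the chord bound $e^{tX_i} \leq 1 + X_i(e^t-1)$ from $X_i \in [0,1]$, the choice $t = \ln(1+\epsilon)$, and the routine two-case verification of $(1+\epsilon)\ln(1+\epsilon) - \epsilon \geq \tfrac{1}{3}\min\{\epsilon,\epsilon^2\}$), and all steps check out.
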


\begin{theorem}[Azuma's inequality]
   \label{thm:azuma}
   Suppose $X_1, \ldots, X_n$ is a submartingale and $|X_i-X_{i+1} | \leq c_i$. Then, we have
   \[
      \Pr[X_n - X_0 \leq -t ] \leq \exp\left(\frac{-t^2}{2\sum_i c_i^2} \right).
   \]
\end{theorem}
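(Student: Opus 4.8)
The plan is to prove this by the exponential moment method (a Chernoff-style argument adapted to the dependent setting), combined with iterated conditioning via the tower property. First I would fix a parameter $\lambda > 0$ to be optimized at the end, and observe that since we are bounding a \emph{lower} tail, Markov's inequality applied to $e^{-\lambda(X_n - X_0)}$ gives
\[
\Pr[X_n - X_0 \leq -t] = \Pr\!\left[e^{-\lambda(X_n - X_0)} \geq e^{\lambda t}\right] \leq e^{-\lambda t}\,\mathbb{E}\!\left[e^{-\lambda(X_n - X_0)}\right].
\]
The entire task then reduces to controlling the moment generating function $\mathbb{E}[e^{-\lambda(X_n - X_0)}]$.

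The key step is to peel off one increment at a time. Writing $D_i = X_i - X_{i-1}$ for the increments (so $|D_i| \leq c_i$ up to re-indexing) and $\mathcal{F}_{n-1}$ for the $\sigma$-algebra generated by $X_1,\ldots,X_{n-1}$, the tower property yields
\[
\mathbb{E}\!\left[e^{-\lambda(X_n - X_0)}\right] = \mathbb{E}\!\left[e^{-\lambda(X_{n-1}-X_0)}\,\mathbb{E}\!\left[e^{-\lambda D_n}\mid \mathcal{F}_{n-1}\right]\right].
\]
To bound the inner conditional expectation I would invoke Hoeffding's lemma: for a random variable $D_n$ confined to an interval of length $2c_n$ we have $\mathbb{E}[e^{-\lambda D_n}\mid \mathcal{F}_{n-1}] \leq \exp(-\lambda\,\mathbb{E}[D_n \mid \mathcal{F}_{n-1}] + \tfrac{1}{2}\lambda^2 c_n^2)$. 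This is exactly where the submartingale hypothesis enters: since $\mathbb{E}[D_n \mid \mathcal{F}_{n-1}] = \mathbb{E}[X_n\mid\mathcal{F}_{n-1}] - X_{n-1} \geq 0$ and $\lambda > 0$, the linear term is nonpositive and can simply be dropped, leaving $\mathbb{E}[e^{-\lambda D_n}\mid \mathcal{F}_{n-1}] \leq e^{\lambda^2 c_n^2/2}$. Iterating this estimate over $i = n, n-1, \ldots, 1$ collapses the product telescopically to give $\mathbb{E}[e^{-\lambda(X_n - X_0)}] \leq \exp(\tfrac{1}{2}\lambda^2 \sum_i c_i^2)$.

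Combining with the Markov step gives $\Pr[X_n - X_0 \leq -t] \leq \exp(-\lambda t + \tfrac{1}{2}\lambda^2 \sum_i c_i^2)$, and I would finish by optimizing over $\lambda$. The exponent is a convex quadratic minimized at $\lambda = t / \sum_i c_i^2$, and substituting this value produces the exponent $-t^2/(2\sum_i c_i^2)$ exactly, which is the claimed bound.

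I would expect the main obstacle to be establishing Hoeffding's lemma itself, as everything else is a mechanical unwinding of the conditioning. The lemma follows from the convexity of $x \mapsto e^{-\lambda x}$ (bounding it above by its chord on $[-c_n, c_n]$) together with a second-order Taylor estimate of the resulting log-moment-generating function. The one point demanding care is that this bound must hold \emph{uniformly} over the conditional law of $D_n$, i.e.\ pointwise in $\mathcal{F}_{n-1}$; only then is the iterated conditioning in the second step legitimate, and only the nonnegativity of the conditional mean (not its exact value) is used, which is what allows the argument to go through for a submartingale rather than only a martingale.
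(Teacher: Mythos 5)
The paper states this as a standard auxiliary result and offers no proof of its own, so there is nothing to diverge from: your argument is the canonical one (Markov's inequality applied to $e^{-\lambda(X_n-X_0)}$, the conditional Hoeffding lemma, the submartingale hypothesis making the conditional drift term nonpositive so it can be dropped, and optimization at $\lambda = t/\sum_i c_i^2$), and it is correct, including the crucial observation that Hoeffding's lemma must be applied pointwise in $\mathcal{F}_{n-1}$ for the tower-property iteration to be legitimate. The only point worth flagging is notational: the theorem as stated indexes the sequence from $X_1$ yet bounds $X_n - X_0$, and writes $|X_i - X_{i+1}| \leq c_i$, a mismatch in the paper's own statement that your parenthetical ``up to re-indexing'' correctly absorbs.
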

%Some supplementary theorems.

\end{document}